\documentclass[runningheads]{llncs}
\usepackage{amsmath}
\usepackage{amssymb}
\usepackage{algorithm}
\usepackage{algorithmic}
\usepackage{fullpage}
\date{}

\title{Some Combinatorial Problems on Halin Graphs} 
\author{M.Kavin ${^1}$ \and K.Keerthana ${^1}$ \and N.Sadagopan ${^2}$ \and Sangeetha.S ${^1}$ \and R.Vinothini ${^1}$} 
\institute{${^1}$ Department of Computer Science and Engineering, College of Engineering Guindy, Chennai, India.\\ ${^2}$ Indian Institute of Information Technology, Design and Manufacturing, Kancheepuram, India. \\
\email{sadagopan@iiitdm.ac.in}}
\begin{document}
\maketitle
\begin{abstract}
Let $T$ be a tree with no degree 2 vertices and $L(T)=\{l_1,\ldots,l_r\}, r \geq 2$ denote the set of leaves in $T$. An Halin graph $G$ is a graph obtained from $T$ such that $V(G)=V(T)$ and $E(G)=E(T) \cup \{\{l_i,l_{i+1}\} ~|~ 1 \leq i \leq r-1\} \cup \{l_1,l_r\}$.  In this paper, we investigate   combinatorial problems such as, testing whether a given graph is Halin or not, chromatic bounds, an algorithm to color Halin graphs with the minimum number of colors. Further, we present polynomial-time algorithms for testing and coloring problems.
\end{abstract}
\section{Introduction}
The study of special graph classes like bipartite graphs, chordal graphs, planar graphs have attracted the researchers both from theoretical and application perspective.  On the theoretical front, combinatorial problems like vertex cover, odd cycle transversal which are otherwise NP-complete in general graphs have polynomial-time algorithms restricted to chordal graphs \cite{chordal}. However, problems like Hamilton path, odd cycle transversal are still NP-complete even on special graph like planar graphs \cite{garey} which calls for identifying a non-trivial subclass of planar graphs where Hamilton path (cycle) is polynomial-time solvable.  One such subclass is the class of Halin graphs.  Halin graph is a graph obtained from a tree with no degree two vertices by joining all leaves with a cycle \cite{halin} and by construction Halin graphs are planar.  Bondy \cite{hamilton} showed that Hamilton cycle is polynomial-time solvable on Halin graphs.  Further, Barefoot \cite{hamil} improved his result and showed that every Halin graph is Hamilton-connected. Special graph class like Halin graphs play a vital role in understanding the inherent complexity of the problem.    Halin graphs are a good candidate graph class between the class of trees and the class of planar graphs.  The reason Halin graph is popular in the literature is due to the fact many combinatorial problems restricted to planar graphs are NP-complete \cite{garey} and on trees, they are polynomial-time solvable.  This leaves open the complexity of the problems in subclass of planar graphs, for example, Halin graphs.  In particular, finding a maximum leaf spanning tree in general and planar graphs are NP-complete whereas on Halin graph it is polynomial-time solvable \cite{mlst}.  \\
Although, chordality testing and planarity testing have received attention in the past, the related question, namely, Halin graph testing (whether a graph is Halin or not), to the best of our knowledge has not been addressed in the literature which we address in this paper.  
A proper coloring of a graph is an assignment of colors to the vertices such that the adjacent vertices receive different color.  It is well-known that all planar graphs can be colored with 4 colors.  In this paper, we present some structural observations on Halin graphs using which we characterize 3-colorable Halin graphs and 4-colorable Halin graphs. 
\subsection{Graph Preliminaries}
Notation and definitions are as per \cite{west,golu}.  Let $G =(V,E)$ be an undirected non weighted simple graph, where $V(G)$ is the set of vertices and $E(G) \subseteq \{\{u,v\}~|~ u,v \in V(G)$, $u \not= v \}$. 
For a vertex $v \in V(G)$, $N_G(v)=\{u ~|~ \{u,v\} \in E(G) \}$.  The degree of a vertex $v$ is the size of $N_G(v)$.  A vertex $v$ is universal if $N_G(v)=V(G) \setminus \{v\}$. A graph is acylic if contains no cycle.  A tree is a connected and an acyclic graph. Let $T$ be a tree with no degree 2 vertices and $L(T)=\{l_1,\ldots,l_r\}$ denote the set of leaves in $T$.  An Halin graph $G$ is a graph obtained from $T$ such that $V(G)=V(T)$ and $E(G)=E(T) \cup \{\{l_i,l_{i+1}\} ~|~ 1 \leq i \leq r-1\} \cup \{l_1,l_r\}$.  Informally, an Halin graph can be seen as a graph constructed from a tree rooted at a vertex  with no degree two vertices and joining all leaves with a cycle.  We refer this informal description as {\em tree-cycle} representation of Halin graph.  
A proper coloring of graph is a vertex coloring in which adjacent vertices receive different color. A graph is $k$-colorable if it can be properly colored with $k$ colors. Let $C_n$ denote a cycle graph on $n$ vertices with $V(G)=\{v_1,\ldots,v_n\}$ and $E(G)=\{\{v_i,v_{i+1} ~|~ 1 \leq i \leq n-1\} \cup \{v_1,v_n\}$.  A wheel graph $W_{n+1}$ on $n+1$ vertices is constructed using $C_n$ and a universal vertex. i.e., $V(W_{n+1})=\{v_1,\ldots,v_{n+1}\}$ such that the set $\{v_1,\ldots, v_n\}$ induces a cycle and $N_G(v_{n+1})=\{v_1,\ldots, v_n\}$.  
\section{Halin Graph Testing}
In this section, we present a polynomial-time algorithm to test whether a given graph is Halin or not. \\ {\bf Correctness and Run-time Analysis:} If the input graph $G$ is Halin, then there exists a tree-cycle representation of $G$ and our algorithm looks for one such representation.  Since, the tree-cycle representation can have any element of $V(G)$ as its root, we run Breadth First Search from each vertex and identify for which vertex it yields the tree-cycle representation.  If for all elements of $V(G)$, we do not get to see the tree-cycle representation, then we output $G$ is not Halin.  For a graph with $n$ vertices and $m$ edges, BFS runs in $O(n+m)$ time.  Checking whether $G_v$ induces a cycle or not incurs $O(n)$ time as $|L(T)|=O(n)$.  Therefore, the overall running time of Halin graph testing is $O(n) \times O(n+m)=O(mn)$, polynomial in the input size.  \\
\begin{algorithm}
\caption{Halin Graph Testing}
\begin{algorithmic}[h]
\FOR{each vertex $v$ in $G$}
\STATE{$S_v=\phi$}
\STATE{Find the Breadth First Search (BFS) Tree $T$ starting at $v$ and let $L(T)$ denotes the set of leaves in $T$}
\FOR{each edge $e=\{x,y\}$ in $E(G)\setminus E(T)$}
\STATE{$S_v=S_v \cup \{x,y\}$ }
\ENDFOR
\STATE{$G_v$ be the graph such that $V(G_v)=S_v$ and $E(G_v)=E(G)\setminus E(T)$}
\IF{$S_v=L(T)$ and $G_v$ is an induced cycle} 
\STATE{Output $G$ is Halin and exit.} 
\ELSE
\STATE{Break and run BFS again from another vertex}
\ENDIF
\ENDFOR
\STATE{Output $G$ is not a Halin graph.} 
\end{algorithmic}
\end{algorithm}
\section{On Chromatic Bounds of Halin Graphs}
It is well known that planar graphs are 4-colorable and hence, Halin graphs are 4-colorable.  In this section, we present some structural observations on Halin graphs using which we characterize 3-colorable and 4-colorable planar graphs.  
\begin{theorem}
\label{thm-color}
Let $G$ be an Halin Graph.  $G$ does not contain $W_{n+1}$, where $n=2k+1, k \in \mathbb{N} $ if and only if $\chi(G)=3$.
\end{theorem}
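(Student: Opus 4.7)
The biconditional splits into two directions. The forward implication, that $\chi(G)=3$ forbids $W_{2k+2}$ as a subgraph, is immediate from $\chi(W_{2k+2})=4$: any subgraph of a 3-chromatic graph is 3-colorable, so $G$ cannot contain an odd wheel.

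For the reverse direction, the plan is to first prove a structural lemma: a Halin graph $G$ contains a wheel $W_m$ as a subgraph if and only if $G$ itself is a wheel. Since $v$ centers a wheel in $G$ exactly when $G[N(v)]$ contains a cycle, it suffices to analyze $G[N(v)]$ via the tree-cycle representation. A case split on whether $v$ is the root, an internal non-root, or a leaf shows that $G[N(v)]$ is always a disjoint union of paths together with (for internal non-root $v$) the isolated parent of $v$, with the only exceptions being when $v$ is the root whose every child is a leaf (whence $G$ is the wheel $W_{|L(T)|+1}$) or when $|L(T)|=3$ (whence $G=K_4=W_4$). Hence the hypothesis "$G$ contains no odd wheel" is equivalent to "$G$ is not an odd wheel."

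It remains to show $\chi(G)\leq 3$ when $G$ is not an odd wheel; the matching lower bound $\chi(G)\geq 3$ is easy, since any deepest internal vertex together with two of its cycle-consecutive leaf children forms a triangle. If $G$ is an even wheel, 2-color the outer cycle with $\{1,2\}$ and assign color $3$ to the center. Otherwise $T$ has at least two internal vertices, and the argument proceeds by induction on this number. Choose a pendant internal vertex $v$ (all of whose children are leaves; the deepest internal vertex qualifies), with leaf children $l_1,\ldots,l_k$ consecutive on the outer cycle, parent $p$, and outside cycle-neighbors $a,b$ of $l_1,l_k$. Contract the subtree at $v$ into a single new leaf $l^{*}$ attached to $p$ and spliced into the outer cycle between $a$ and $b$; the resulting $G'$ is a smaller Halin graph and, by the structural lemma, still avoids odd wheels. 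Apply the induction hypothesis to 3-color $G'$, then extend to $G$ by choosing $c(v)\neq c(p)$ and coloring $l_1,\ldots,l_k$ along the fan with the remaining colors.

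The main obstacle will be the extension step when $k=2$: the two-leaf fan is rigid, forcing $\{c(l_1),c(l_2)\}=\{1,2,3\}\setminus\{c(v)\}$, which may clash with the preassigned colors of $a,b$ inherited from $G'$, most severely when $c(a)=c(b)=c(p)$. The plan to resolve this is a small case analysis: either pick a different pendant vertex where one exists, or modify the local reduction (for instance, retain two of $v$'s children in $G'$ rather than collapsing to a single leaf) so that $G'$'s coloring bequeaths suitable colors at the boundary. The parity of $k$ likewise requires a short split because the fan colors alternate along the leaf path between $a$ and $b$. Once these edge cases are handled, the induction closes and yields the desired 3-coloring.
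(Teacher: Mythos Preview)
Your approach is genuinely different from the paper's. The paper does not prove a structural lemma and does not argue by induction; instead it exhibits an explicit greedy algorithm (fix the colour of the leftmost leaf and of the parent of the rightmost leaf, colour the two root-to-boundary paths with two colours, then run a DFS-based greedy on the rest) and argues directly that every uncoloured vertex sees at most two colours when it is reached. Your structural lemma (a Halin graph contains a wheel subgraph iff it \emph{is} a wheel) is correct and is a cleaner way to read the hypothesis, but the inductive part of your plan has a real hole.

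The sentence ``the resulting $G'$ is a smaller Halin graph and, by the structural lemma, still avoids odd wheels'' is not justified. The structural lemma, applied to $G'$, only tells you that $G'$ contains an odd wheel iff $G'$ \emph{is} an odd wheel; it does not tell you that $G'$ is not one. And $G'$ can be an odd wheel even when $G$ is not: take $T$ with exactly two internal vertices, the root $r$ with two leaf children and one non-leaf child $v$, and $v$ with two leaf children. Then $G$ is a $6$-vertex Halin graph that is $3$-colourable, but contracting the fan at $v$ to a single leaf gives $G'=K_4=W_4$, an odd wheel with $\chi=4$, so the induction hypothesis does not apply. This is a separate issue from the extension obstacle you flag for $k=2$; here the reduction itself lands outside the class you are inducting over. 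Your proposed fix of ``retaining two of $v$'s children'' does not help when $k=2$ (it leaves $G$ unchanged), and ``pick a different pendant vertex'' fails when $v$ is the unique pendant internal vertex, as in the example above. The plan is salvageable (for instance, treat the case of exactly two internal vertices by hand, or when contraction to one leaf would produce an odd wheel, contract to two leaves to force an even wheel in the base case), but as written the induction does not close, and the gap is not the one you identified.
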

\begin{proof}
{\em Necessity:} Consider the tree-cycle representation of $G$ with vertex $v$ being the root of the tree.  Since $G$ does not contain degree two vertices by definition, we observe that there exists a triangle in $G$.  i.e., there exists $\{l_i,l_{i+1}\}$ in $L(T)$ such that $l_i$ and $l_{i+1}$ have a common parent $p$.  It is easy to see that the set $\{p,l_i,l_{i+1}\}$ induces a triangle in $G$.  Clearly, any proper coloring of $G$ requires three colors to color $\{p,l_i,l_{i+1}\}$.  Therefore, $\chi(G) \geq 3$.   Further, we know from Theorem \ref{coloringtheorem} that any wheel-free Halin graph can be colored with at most 3 colors.  Therefore, $\chi(G)=3$. \\{\em Sufficiency:} We prove this by contradiction.  Suppose $G$ contains $W_{n+1}$, where $n=2k+1, k \in \mathbb{N}$.  Since $G$ contains $C_{2k+1}$, we need three colors to color $C_{2k+1}$ and to color $W_{n+1}$, we need one more color.  Therefore, $G$ is 4-colorable, contradicting the given fact that $G$ is 3-colorable.  This completes the sufficiency and hence, the theorem follows. \qed
\end{proof}
\begin{corollary}
Let $G$ be an Halin Graph which contains $W_{n+1}$, where $n=2k+1, k \in \mathbb{N}$ as an induced subgraph.  Then, $\chi(G)=4$.
\end{corollary}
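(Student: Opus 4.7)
The plan is to establish the chromatic number of $G$ by sandwiching it between $4$ and $4$, namely by separately proving $\chi(G) \geq 4$ and $\chi(G) \leq 4$.

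For the lower bound, I would exploit the induced $W_{n+1}$ with $n = 2k+1$ directly, essentially reusing the argument already sketched in the sufficiency direction of Theorem \ref{thm-color}. The wheel contains the odd cycle $C_{2k+1}$, which by itself forces at least three colors on its rim vertices; the universal vertex of the wheel is adjacent to every rim vertex, so it cannot reuse any of those three colors and therefore requires a fourth. Since the wheel is an \emph{induced} subgraph of $G$, any proper coloring of $G$ restricts to a proper coloring of $W_{n+1}$, and hence $\chi(G) \geq \chi(W_{n+1}) \geq 4$.

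For the upper bound, I would appeal to the fact that Halin graphs are planar by construction (a plane tree together with a cycle through its leaves in the outer face), so the Four Color Theorem gives $\chi(G) \leq 4$ immediately. Combining the two bounds yields $\chi(G) = 4$.

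The only mild subtlety, and thus the main thing to be careful about, is making sure the odd cycle of the wheel really is forced to use three colors \emph{inside} $G$. This is automatic because the cycle is induced (it is part of an induced $W_{n+1}$), and an induced odd cycle has chromatic number $3$ in any supergraph. Everything else is routine, and no new machinery beyond Theorem \ref{thm-color} and the planarity of Halin graphs is needed.
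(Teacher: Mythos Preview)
Your argument is correct and matches the paper's own proof almost verbatim: the lower bound $\chi(G)\ge 4$ comes from the odd wheel (exactly the sufficiency direction of Theorem~\ref{thm-color}), and the upper bound $\chi(G)\le 4$ comes from the Four Color Theorem via planarity of Halin graphs. One very minor quibble: your remark that the cycle must be \emph{induced} to force three colors is unnecessary, since $\chi(G)\ge\chi(H)$ holds for \emph{any} subgraph $H$, not just induced ones; but this does not affect the validity of your proof.
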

\begin{proof}
The claim follows from Theorem \ref{thm-color} and the fact that planar graphs are 4-colorable. \qed
\end{proof}
We call an Halin graph not containing $W_{n+1}$, where $n=2k+1, k \in \mathbb{N}$ as induced subgraph as {\em wheel free Halin graph}.  We now present an algorithm which produces a 3-coloring of wheel free Halin graph.  Further, we show that the algorithm runs in polynomial time.  We use the set $\{c_1,c_2,c_3\}$ of colors with $c_1$ being the color with least index.  {\em candidate color set} is the set of colors available for a vertex $u$ at that point of iteration and our algorithm picks the least indexed color from the candidate color set to color $u$.  For a vertex $u$, $color(u)$ denotes the color assigned to $u$ and $parent(u)$ denotes the parent of $u$.
\begin{algorithm}
\caption{3-coloring of wheel free Halin Graphs}
\label{coloring}
\begin{algorithmic}[1]
\STATE{ {\bf Input:} A wheel free Halin graph $G$}
\STATE{{\bf Output:} 3-coloring of $G$}
\STATE{Consider the {\em tree-cycle} representation of $G$ and let $T$ be the tree part of tree-cycle representation rooted at the vertex $v$.  To get $T$, simply remove the edges in the cycle part.}
\STATE{Initially, all vertices are uncolored.}
\STATE{With respect to $v$, using $T$, identify the left most node (leaf) $p$ and the right most node (leaf) $q$}
\STATE{Assign $color(p)=c_3$ and $color(parent(q))=c_3$}
\STATE{For the vertices in the path $P_{vs}$, where $s=parent(p)$, assign the colors alternately from the set $\{c_1,c_2\}$}
\STATE{For the vertices in the path $P_{vt}$, where $t=parent(parent(q))$, assign the colors alternately from the set $\{c_1,c_2\}$}
\STATE{To color the remaining vertices, perform the Depth First Search starting from $v$, for each uncolored vertex $u$, maintain the {\em candidate color set} which is the set of available colors from $\{c_1,c_2,c_3\}$ to color $u$.  Note some of the elements in $N_G(u)$ may be colored because of the previous steps}
\STATE{From the {\em candidate color set}, pick the color with the least index and assign the same to $color(u)$}
\end{algorithmic}
\end{algorithm}
\begin{theorem}
\label{coloringtheorem}
Algorithm \ref{coloring} colors the input graph with 3 colors.
\end{theorem}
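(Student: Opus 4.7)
The plan is to prove the theorem by (a) verifying that the preassignment in Steps 6--8 is a proper partial coloring of $G$, (b) establishing the invariant that, during the DFS phase, the candidate color set of every newly visited uncolored vertex is non-empty, and (c) confirming that the final assignment is proper on every tree edge and every cycle edge.

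For (a), the two spines $P_{vs}$ and $P_{vt}$ emanate from the root $v$ and are colored by alternating between $c_1$ and $c_2$. Since both start at $v$ with the same palette and alternation convention, they agree on their common prefix, giving a proper $2$-coloring of $V(P_{vs}) \cup V(P_{vt})$. The color $c_3$ assigned in Step 6 to $p$ and $parent(q)$ does not conflict with these spines, because the tree-neighbors $s = parent(p)$ and $t = parent(parent(q))$ lie on the spines and therefore use only $c_1, c_2$. The wheel-free hypothesis is used here to rule out the degenerate configurations in which a single vertex would be required to carry both $c_3$ (by Step 6) and a color from $\{c_1,c_2\}$ (by Step 7 or 8); these collisions correspond to a neighborhood around the root that induces an odd wheel.

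For (b), I would carry out a case analysis by vertex type. An internal vertex $u$ processed by the DFS has only $parent(u)$ colored among its neighbors, so its candidate color set has size at least $2$. A cycle vertex $l_i$ with $1 < i < r$ has only $parent(l_i)$ and the DFS-predecessor $l_{i-1}$ colored, so its candidate color set has size at least $1$. The tight case is $l_r$, whose three neighbors $parent(l_r) = parent(q)$, $l_1 = p$, and $l_{r-1}$ are all colored when it is visited; however, $parent(l_r)$ and $l_1$ both carry $c_3$, so only two distinct forbidden colors appear, leaving at least one valid color in $\{c_1,c_2\}$. A parallel check must be made for leaves adjacent to $p$ and to $parent(q)$ along the cycle, where the preassignment can interact non-trivially with the greedy step. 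Claim (c) is then a per-edge inspection: each tree edge is proper either by the spine pattern or because the greedy step explicitly avoids the parent's color, and each cycle edge is proper because the DFS visits its two endpoints at different times and the later endpoint avoids the earlier one's color.

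The main obstacle will be the consistency check in (a) together with the tight verification at $l_r$ in (b). Both rest on the wheel-free hypothesis: a collision between Step 6 and Steps 7--8 arises only when the tree's structure near the root would already force an odd wheel in $G$, and the tight case at $l_r$ succeeds precisely because the preassignment deliberately places $c_3$ at both $p$ and $parent(q)$ so that the cycle can be closed with only two distinct forbidden colors at its last vertex. Once these are dispensed with, the remaining DFS induction is routine.
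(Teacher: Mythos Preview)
Your proposal is correct and follows essentially the same line as the paper's proof: both hinge on the invariant that, when the DFS reaches an uncolored vertex, its already-colored neighbours occupy at most two colours, with the only tight case being the rightmost leaf $q$, which succeeds precisely because Step~6 forces $p$ and $parent(q)$ to share $c_3$. Your treatment is considerably more thorough than the paper's---you explicitly verify that the preassignment in Steps~6--8 is proper, separate the vertex types in the DFS phase, and flag where the wheel-free hypothesis enters---whereas the paper compresses all of this into a two-sentence sketch that states the invariant and singles out the case of $q$.
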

\begin{proof}
To prove our claim, we observe that as per our algorithm, when we encounter an uncolored vertex $u$, either  all vertices in $N_G(u)$ are colored or only a subset $S \subset N_G(u)$ is colored and in either case, it is colored with at most two colors from the set $\{c_1,c_2,c_3\}$.  This is true for the following reasons: during DFS, when we visit an uncolored vertex $u$, $parent(u)$ and one of its neighbours are colored and together they have used up at most two colors.  Therefore, $u$ can be colored with the least indexed color from the candidate color set.  However, for the rightmost leaf $q$, all three vertices in $N_G(q)$ are already colored.  The Step-6 of our algorithm ensures that they still use at most two colors.  Therefore, we are left with the third color to color $q$.  Hence, the theorem. \qed
\end{proof}

\end{document}